\newtheorem{theorem}{Theorem}
\newtheorem{corollary}[theorem]{Corollary}
\newtheorem{definition}[theorem]{Definition}
\newtheorem{lemma}[theorem]{Lemma}
\newtheorem{proposition}[theorem]{Proposition}
\newtheorem{remark}[theorem]{Remark}
\newenvironment{proof}[1][Proof]{\textbf{#1.} }{\ \rule{0.5em}{0.5em}}
\title{The explicit Laplace transform for the Wishart process\thanks{We are indebted to José Da Fonseca, Antoine Jacquier, Kyoung-Kuk Kim, Eberhard Mayerhofer, Eckhard Platen, Wolfgang Runggaldier and an anonymous referee for helpful suggestions.}}
\author{
\textrm{Alessandro Gnoatto}\thanks{Dipartimento di Matematica - Universit\`a degli Studi di Padova (Italy) and Mathematisches Institut, LMU M\"unchen (Germany). Email: gnoatto@mathematik.uni-muenchen.de.}
\and\textrm{Martino Grasselli}\thanks{Dipartimento di Matematica - Universit\`a degli Studi di Padova (Italy), ESILV Ecole Sup\'{e}rieure d'Ing\'{e}nieurs L\'{e}onard de Vinci, D\'{e}partement Math\'{e}matiques et Ing\'{e}nierie Financi\`{e}re, Paris La D\'{e}fense (France) and QUANTA FINANZA S.R.L., Venezia (Italy). Email: grassell@math.unipd.it.}
}
\begin{document}

\maketitle
\begin{abstract}
We derive the explicit formula for the joint Laplace transform of the Wishart process and its time integral which extends the original approach of \cite{article_Bru}. We compare our methodology with the alternative results given by the variation of constants method, the linearization of the Matrix Riccati ODE's and the Runge-Kutta algorithm. The new formula turns out to be fast and accurate.
\end{abstract}

\textbf{Keywords:} Affine processes, Wishart process, ODE, Laplace Transform. \\

\textbf{JEL codes:} G13, C51.

\textbf{AMS Class 2010:} 65C30, 60H35, 91B70.\\

\section{Introduction}
In this paper we propose an analytical approach for the computation of the moment generating function for the Wishart process which has been introduced by \cite{article_Bru}, as an extension of square Bessel processes (\cite{article_PY}, \cite{book_RY}) to the matrix case. Wishart processes belong to the class of affine processes and they generalise the notion of positive factor in so far as they are defined on the set of positive semidefinite real $d\times d$ matrices, denoted by $S_d^+$. 
Given a filtered probability space  $(\Omega,\mathcal{F},\mathcal{F}_t,\mathbb{P})$ satisfying the usual assumptions and a $d\times d$ matrix Brownian motion $B$ (i.e. a matrix whose entries are independent Brownian motions  under $\mathbb{P}$), a Wishart process on $S_d^{+}$ is governed by the  SDE
\begin{align}\label{Wis_dyn}
dS_t=\sqrt{S_t}dB_tQ+Q^{\top}dB_t^{\top}\sqrt{S_t}+\left(MS_t+S_tM^{\top}+b\right) dt,\quad S_0\in S^+_d,\quad t\geq0
\end{align}
where $Q\in GL_d$ (the set of invertible real $d\times d$ matrices), $M\in M_d$ (the set of real $d\times d$ matrices) with all eigenvalues on the negative half plane in order to ensure stationarity, and where the matrix $b$ satisfies $b\succeq (d-1)Q^{\top}Q$, that is $b- (d-1)Q^{\top}Q\in S^+_d$. In the literature, the constant drift term is often of the more restrictive form $b=\alpha Q^{\top}Q$, for $\alpha \geq d-1$. In case
the  (Gindikin) real parameter $\alpha$ satisfies  $\alpha\geq d+1$, the process takes values in the interior of $S_d^+$, denoted by $S_d^{++}$, in analogy with the Feller condition for the scalar case. In the dynamics above $\sqrt{S_t}$ denotes the square root in matrix sense. Existence and uniqueness results for the solution of \eqref{Wis_dyn} may be found in \cite{article_Bru} under parametric restrictions and in \cite{article_MPS} in full generality. We denote by $WIS_d(S_0, b, M,Q)$ the law of the Wishart process $(S_t)_{t\geq0}$. The starting point of the analysis was given by considering the square of a matrix Brownian motion $S_t=B_t^\top B_t$, while the  generalization to the particular dynamics \eqref{Wis_dyn} was introduced by looking at squares of matrix Ornstein-Uhlenbeck processes (see \cite{article_Bru}).

Bru proved many interesting properties of this process, like non-collision of the eigenvalues (when $\alpha \geq d+1$ under parametric restrictions) and the additivity property shared with square Bessel processes. Moreover, she computed the Laplace transform of the Wishart process and its integral (the \textit{Matrix Cameron-Martin formula} using her terminology), which plays a central role in the applications: 
\begin{equation}
\mathbb{E}_{S_0}^{\mathbb{P}}\left[\exp\left\{-Tr\left[wS_t+\int_{0}^{t}{vS_sds}\right]\right\}\right],
\label{Laplace}
\end{equation} where $Tr$ denotes the trace operator and $w,v$ are symmetric matrices for which the expression (\ref{Laplace}) is finite. Bru found an explicit formula for (\ref{Laplace}) (formula (4.7) in  \cite{article_Bru}) under the assumption that the symmetric diffusion matrix $Q$ and the mean reversion matrix $M$ commute.

Positive (semi)definite matrices arise in finance in a natural way and the nice analytical properties of affine processes on $S_d^+$ opened the door to new interesting affine models that allow for non trivial correlations among positive factors, a feature which is precluded in classic (linear) state space domains like $\mathbb{R}_{\geq0}^n\times\mathbb{R}^m$ (see \cite{article_DFS}).  Not surprisingly, the last years have witnessed the birth of a whole branch of literature on applications of affine processes on $S_d^+$. The first proposals were formulated in \cite{article_gou01}, \cite{article_gou02}, \cite{article_gou03}, \cite{article_Gou06} both in discrete and continuous time. Applications to multifactor volatility and stochastic correlation can be found in \cite{article_DaFonseca1}, \cite{article_DaFonseca2}, \cite{article_DaFonseca3}, \cite{article_DaFonseca4}, \cite{dafgra11}, \cite{article_BPT}, \cite{BauerleLi2013} and \cite{article_BCT} both in option pricing and portfolio management. These contributions consider the case of continuous path Wishart processes. As far as jump processes on $S_d^+$ are concerned we recall the proposals by \cite{article_BNS02}, \cite{article_MPS02} and \cite{article_PS}. \cite{article_LT} and \cite{article_Cuchiero} consider jump-diffusions models in this class, while \cite{gra08} investigate processes lying in the more general symmetric cones state space domain, including the interior of the cone $S_d^{+}$ (see also the recent developements in \cite{Cuchiero_Phd}).

The main contribution of this paper consists in relaxing the commutativity assumption made in \cite{article_Bru} and proving that it is possibile to characterize explicitly the joint distribution of the Wishart process and its time integral for a general class of (even not symmetric) mean-reversion and diffusion matrices satisfying the assumptions above with a general constant drift term $b$. The proof of our general Cameron Martin formula is in line with that of theorem 2'' in Bru and we will provide a step-by-step derivation. The study of transform formulae for affine processes on $S_d^+$ is also the topic of \cite{kk2013}, where results concerning Wishart bridges are also provided.

The paper is organized as follows: in section 2 we prove our main result, which extends the original approach by Bru. In section 3 we recall some other existing methods which have been employed in the past literature for the computation of the Laplace transform: the variation of constant, the linearization and the Runge-Kutta method. The first two methods provide analytical solutions, so they should be considered as \textit{competitors} of our new methodology. We show that the variation of constants method is unfeasible for real-life computations, hence the truly analytic competitor is the linearization procedure. After that, we present some applications of our methodology to various settings: a multifactor stochastic volatiliy model, a stochastic correlation model, a short rate model and finally we present a new approach for the computation of a solution to the Algebraic Riccati equation.

\section{The Matrix Cameron-Martin Formula}
\subsection{The Wishart process from the point of view of affine processes}
Before we introduce our result, we would like to report some notations and terminology from \cite{article_Cuchiero}. Let us recall first the definition of an affine process.

\begin{definition}\label{Affine_Process_Definition}
 A Markov process $S$ on $S_d^+$ is called \emph{affine} if it is stochastically continuous and its Laplace transform has exponential-affine dependence on the initial state, i.e. the following equation holds for all $t \geq 0$ and $u \in S_{d}^{+}$:
    \begin{equation}
      \mathbb{E}\left[e^{-Tr\left[u S_t\right]}\right] = \int_{S_{d}^{+}}\! {e^{-\operatorname{Tr}\left[u \xi \right]}\, p_{t}\!\left(x,d\xi\right)} = e^{-\phi\left(t,u\right) - \operatorname{Tr}\left[ \psi\left(t,u\right) S_0\right]}\,,
    \end{equation}
     for some functions $\phi: \mathbb{R}_{+} \times S_{d}^{+} \rightarrow \mathbb{R}_{+}$ and $\psi: \mathbb{R}_{+}\times S_{d}^{+} \rightarrow S_{d}^{+}$.
\end{definition}

In \cite{article_Cuchiero}, a complete characterization of affine processes on $S_d^+$ is provided in terms of the so-called admissible parameter set (see Definition 2.3 in \cite{article_Cuchiero}), which constitutes the affine analogue of a L\' evy triplet. The Wishart process with dynamics \eqref{Wis_dyn} is a conservative pure diffusion affine process with admissible parameter set $\left(\alpha,b,B(x),0,0,0,0\right)$, where $B(x)=Mx+xM^\top$ and $\alpha = Q^\top Q$. Since the process is affine, it is possible to reduce the Kolmogorov PDE associated to the computation of (\ref{Laplace}) to a non linear (matrix Riccati) ODE. 
\begin{proposition}\label{prop_1}
(\cite{article_Cuchiero}) Let $S_t\in WIS_d(S_0, b, M, Q)$ be the Wishart process defined by (\ref{Wis_dyn}), then 
\begin{align*}
\mathbb{E}_{S_0}^{\mathbb{P}}\left[\exp\left\{-Tr\left[wS_t+\int_{0}^{t}{vS_sds}\right]\right\}\right]=\exp\left\{-{\phi}(t)-Tr\left[{\psi}(t)S_0\right]\right\},
\end{align*}
where the functions ${\psi}$ and ${\phi}$ satisfy the following system of ODE's.
\begin{align}
\frac{d {\psi}}{d t}&={\psi}M+M^{\top}{\psi}-2{\psi}Q^{\top}Q{\psi}+v\quad{\psi}(0)=w,\label{ODEpsi}\\
\frac{d {\phi}}{d t}&=Tr\left[b{\psi}(t)\right]\quad{\phi}(0)=0.\label{ODEphi}
\end{align}
\end{proposition}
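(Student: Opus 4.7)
The plan is to establish the exponential--affine ansatz via Itô's formula on a candidate martingale. Fix a horizon $t>0$, guess that
\[
u(s,S)=\exp\bigl(-\phi(s)-Tr[\psi(s)S]\bigr)
\]
with smooth matrix--valued $\psi:\mathbb{R}_+\to S_d$ and scalar $\phi:\mathbb{R}_+\to\mathbb{R}$ satisfying $\psi(0)=w$, $\phi(0)=0$, and consider
\[
M_s=\exp\left\{-\phi(t-s)-Tr[\psi(t-s)S_s]-\int_0^s Tr[vS_r]dr\right\},\qquad s\in[0,t].
\]
By construction $M_t=\exp\{-Tr[wS_t]-\int_0^t Tr[vS_r]dr\}$ and $M_0=u(t,S_0)$, so once $(\phi,\psi)$ are chosen to make $M$ a true martingale, taking expectations yields the claimed formula.

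The core of the argument is the drift computation. First I would apply Itô's formula to $M_s=\exp(X_s)$ where
\[
X_s=-\phi(t-s)-Tr[\psi(t-s)S_s]-\int_0^s Tr[vS_r]dr,
\]
using the Wishart SDE \eqref{Wis_dyn}. The drift contribution from $dS_s$ gives, upon taking the trace against the symmetric $\psi(t-s)$, the deterministic term $Tr[(\psi M+M^{\top}\psi)S_s+\alpha\,\psi Q^{\top}Q]\,ds$. The chain rule on $\psi(t-s)$ and $\phi(t-s)$ produces $\dot\phi(t-s)+Tr[\dot\psi(t-s)S_s]$ with the appropriate sign. The key technical step is the quadratic variation: using that $B$ has independent Brownian entries, a direct computation from $\sqrt{S_s}dB_sQ+Q^{\top}dB_s^{\top}\sqrt{S_s}$ gives
\[
d\langle Tr[\psi(t-s)S]\rangle_s=4\,Tr\bigl[\psi(t-s)\,S_s\,\psi(t-s)\,Q^{\top}Q\bigr]ds,
\]
so that $\tfrac12 d\langle X\rangle_s$ contributes $2\,Tr[\psi Q^{\top}Q\psi S_s]ds$.

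Collecting all drift terms and demanding that the coefficient of the martingale part $dM_s=M_s\,(\text{drift})\,ds+M_s\,(\text{true noise})$ vanish, I would equate separately the coefficients linear in $S_s$ and the remaining $S_s$--free part. The linear part produces
\[
\dot\psi(t-s)=v+\psi M+M^{\top}\psi-2\psi Q^{\top}Q\psi,
\]
and the free part yields $\dot\phi(t-s)=\alpha\,Tr[Q^{\top}Q\,\psi(t-s)]$. Reparameterising in forward time (the derivative with respect to $t-s$ equals minus the derivative with respect to $s$, but after the sign from $d(t-s)=-ds$ the ODEs take exactly the stated forward form), one obtains \eqref{ODEpsi}--\eqref{ODEphi} with initial conditions $\psi(0)=w$, $\phi(0)=0$. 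Taking $\mathbb{E}[M_t]=M_0$ then gives the exponential--affine representation.

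The main obstacle is not the algebra, which is routine matrix calculus, but the justification that $M$ is a true martingale rather than merely a local martingale on $[0,t]$ for $(w,v)$ in the domain of convergence. I would handle this by the same truncation/localisation scheme used in the proof of Theorem \ref{main}: stop at exit times from compact subsets of $S_d^+$, pass to the limit using the finiteness of the Laplace transform on $\mathcal{D}_t$ and dominated convergence. Existence of the solution $\psi$ to the matrix Riccati equation up to time $t$ is guaranteed on $\mathcal{D}_t$ since blow--up of $\psi$ would contradict finiteness of the Laplace transform, and $\phi$ is then recovered by direct integration.
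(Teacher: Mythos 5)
Your derivation is essentially correct, but note that the paper does not actually prove Proposition \ref{prop_1}: its ``proof'' is the single line ``See \cite{article_Cuchiero}.'' So there is no internal argument to compare against; what you have written is precisely the standard affine/Feynman--Kac derivation that the cited reference carries out. Your core computations check out. In particular the quadratic-variation step is right: writing the martingale part of $Tr[\psi S]$ as $2\,Tr[Q\psi\sqrt{S_s}\,dB_s]$ (using symmetry of $\psi$) and using $d\langle Tr[A\,dB]\rangle = Tr[AA^{\top}]\,dt$ gives
\begin{align*}
d\langle Tr[\psi S]\rangle_s = 4\,Tr\left[Q\psi S_s\psi Q^{\top}\right]ds = 4\,Tr\left[\psi S_s\psi Q^{\top}Q\right]ds,
\end{align*}
so the It\^o correction contributes $2\,Tr[\psi Q^{\top}Q\psi S_s]$ and the matching of coefficients yields exactly \eqref{ODEpsi}--\eqref{ODEphi}; the two sign changes from the time reversal $s\mapsto t-s$ cancel as you say.

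Two caveats. First, the genuinely delicate point is the one you relegate to the last paragraph: upgrading the positive local martingale $M_s$ to a true martingale for general $(w,v)\in\mathcal{D}_t$. ``Localisation plus dominated convergence'' is not automatic, because after stopping you have no integrable dominating variable unless you produce one; the clean cases are $w,v\in S_d^{+}$, where a comparison argument gives $\psi(\tau)\succeq 0$ for all $\tau\le t$ and hence $M$ is bounded, while the extension to the full convergence domain is exactly the nontrivial content of \cite{article_Cuchiero} (and of Mayerhofer's work on exponential moments). If you want a self-contained proof you should either restrict to positive semidefinite $w,v$ or invoke those results explicitly. Second, your appeal to ``the same truncation/localisation scheme used in the proof of Theorem \ref{main}'' is misplaced: that proof contains no localisation argument; it proceeds by a Girsanov change of measure and cites \cite{Mayer2012} for the martingale property of the stochastic exponential under $\alpha\ge d+1$. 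Neither point invalidates your derivation of the ODE system, which is the substance of the proposition.
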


\subsection{Statement of the result}
In this section we proceed to prove the main result of this paper. We report a formula completely in line with the Matrix Cameron-Martin formula given by \cite{article_Bru}. 
 
\begin{theorem}\label{main}
Let $S\in WIS_d(S_0, b, M,Q)$ be the Wishart process solving (\ref{Wis_dyn}), assume
\begin{align}
M^{\top}\left(Q^\top Q\right)^{-1}=\left(Q^\top Q\right)^{-1}M,\label{cond_true_m}
\end{align}
let $b\succeq (d+1)Q^\top Q$ and define the set of convergence of the Laplace transform 
\begin{align*}
\mathcal{D}_t=&\left\{ w,v\in S_d : \mathbb{E}_{S_0}^{\mathbb{P}}\left[\exp\left\{-Tr\left[wS_t+\int_{0}^{t}{vS_sds}\right]\right\}\right]<+\infty\right\}.\end{align*}
Then for all $u,v\in \mathcal{D}_t$ the joint moment generating function of the process and its integral is given by:
\begin{align*}
&\mathbb{E}_{s_0}^{\mathbb{P}}\left[\exp\left\{-Tr\left[wS_t+\int_{0}^{t}{vS_sds}\right]\right\}\right]=\exp\left\{-\phi(t)-Tr\left[\psi(t)s_0\right]\right\},
\end{align*}
where the functions $\phi$ and $\psi$ are given by:
\begin{align*}
\psi(t)&=\frac{\left(Q^\top Q\right)^{-1}M}{2}-\frac{Q^{-1}\sqrt{\bar{v}}k(t)Q^{\top^{-1}}}{2},\nonumber\\
\phi(t)&=Tr\left[b\frac{\left(Q^\top Q\right)^{-1}M}{2}\right]t\nonumber\\
&+\frac{1}{2}Tr\left[\left(Q^\top\right)^{-1}b\left(Q\right)^{-1}\log\left(\sqrt{\bar{v}}^{-1}\left(\sqrt{\bar{v}}\cosh(\sqrt{\bar{v}}t)+\bar{w}\sinh(\sqrt{\bar{v}}t)\right)\right)\right],
\end{align*}
with $k(t)$ given by: 
\begin{align*}
k(t)=-\left(\sqrt{\bar{v}}\cosh(\sqrt{\bar{v}}t)+\bar{w}\sinh(\sqrt{\bar{v}}t)\right)^{-1}\left(\sqrt{\bar{v}}\sinh(\sqrt{\bar{v}}t)+\bar{w}\cosh(\sqrt{\bar{v}}t)\right)
\end{align*}
and $\bar{v},\bar{w}$ are defined as follows:
\begin{align*}
\bar{v}&=Q\left(2v+M^\top Q^{-1}Q^{\top^{-1}}M\right)Q^\top,\\
\bar{w}&=Q\left(2w-\left(Q^\top Q\right)^{-1}M\right)Q^\top.
\end{align*}
Moreover, the set where the Laplace transform is regular at least contains the area defined by 
\begin{align}
v &\succ -M^\top(2Q^\top Q)^{-1} M \label{conditionv},\\
w& \succeq (2Q^\top Q)^{-1}M-Q^{-1}\sqrt{\bar{v}}(2Q^\top)^{-1}.\label{conditionw}
\end{align}
\end{theorem}
%

\begin{remark} The derivation of Theorem \ref{main} involves a change of probability measure that will be illustrated in the sequel. This change of measure introduces a lack of symmetry which does not allow to derive a fully general formula. However, under the assumption \eqref{cond_true_m} we are able to span a large class of processes. 
In fact, the equality \eqref{cond_true_m} requires the symmetry of a matrix: this involves $d(d-1)/2$ linear equalities in $d^2$ variables, therefore if we fix the parameters of the matrix $(Q^T Q)^{-1}$ and consider the constraints on the parameters of $M$, we get $d^2- d(d-1)/2=d(d+1)/2$ degrees of freedom for choosing the matrix $M$.

In the two dimensional case, let:
\begin{align*}
\left(Q^\top Q\right)^{-1}=\left(\begin{array}[pos]{cc}
a & b\\
b & c	
\end{array}\right),\quad M=\left(
\begin{array}[pos]{cc}
x & y\\
z & t	
\end{array}\right),
\end{align*}
then condition \eqref{cond_true_m} can be expressed as:
\begin{align*}
bx+cz=ay+tb,
\end{align*}
meaning that we can span a large class of parameters, thus going far beyond the commutativity assumption $QM=MQ$ for $Q\in S_d,M\in S_d^-$ as in \cite{article_Bru}. 
\end{remark}
\begin{remark}
Conditions (\ref{conditionv}) - (\ref{conditionw}) give explicit parameter constraints in order to ensure the finiteness of the Laplace transform. If they are not satisfied, then the Laplace transform is regular only up to a (possibly finite) explosion time. Notice that conditions (\ref{conditionv}), (\ref{conditionw}) extend the usual assumptions $v,w\in S_d^+$ as in \cite{article_Cuchiero}.
\end{remark}

\subsection{Proof of Theorem \ref{main}}

We will prove the theorem in several steps. We first consider a simple Wishart process with $M=0$ and $Q=I_d$, defined under a measure $\tilde{\mathbb{P}}$ equivalent to $\mathbb{P}$. The second step will be given by the introduction of the volatility matrix $Q$, using an invariance result. Finally, we will prove the extension for the full process by relying on a measure change from $\tilde{\mathbb{P}}$ to $\mathbb{P}$. Under this last measure, the Wishart process will be defined by the dynamics \eqref{Wis_dyn}.\\

As a starting point we fix a probability measure $\tilde{\mathbb{P}}$ such that $\tilde{\mathbb{P}}\approx\mathbb{P}$. Under the measure $\tilde{\mathbb{P}}$ we consider a matrix Brownian motion $\hat{B}=(\hat{B}_t)_{t\geq0}$, which allows us to define the process $\Sigma_t\in WIS_d(S_0, \tilde{b}, 0, I_d)$, i.e. a process that solves the following matrix SDE:
\begin{align}
d\Sigma_t=\sqrt{\Sigma_t}d\hat{B}_t+d\hat{B}_t^{\top}\sqrt{\Sigma_t}+\tilde{b} dt,\quad \Sigma_0\in S^+_d,\label{Wishartbasic}
\end{align}
where the drift term $\tilde{b}$ satisfies the following condition:
\begin{align*}
\tilde{b}-\left(d+1\right)I_d\in S^+_d.
\end{align*}

For this process, relying on \cite{article_PY} and \cite{article_Bru}, we are able to calculate the Cameron-Martin formula. For the sake of completeness we report the result in \cite{article_Bru}, which constitutes an extension of the methodology introduced in \cite{article_PY}. The result was proved for the restrictive drift $\alpha I_d$, but we will extend it to the general drift by looking to the system of Riccati ODEs.
 
\begin{proposition}\label{propBru} (\cite{article_Bru} Proposition 5 p.742) If $\mathbf{\Phi}:\mathbb{R}_{+}\rightarrow S_d^+$ is continuous, constant on $\left[t,\infty\right[$ and such that its right derivative (in the distribution sense) $\mathbf{\Phi}'_d:\mathbb{R}_{+}\rightarrow S_d^-$ is continuous, with $\mathbf{\Phi}_d(0)=I_d$, and $\mathbf{\Phi}'_d(t)=0$, then for every Wishart process $\Sigma_t\in WIS_d(\Sigma_0, \alpha, 0, I_d)$ we have:
\begin{align*}
\mathbb{E}\left[\exp\left\{-\frac{1}{2}Tr\left[\int_{0}^{t}{\mathbf{\Phi}''_{d}(s)\mathbf{\Phi}^{-1}_{d}(s)\Sigma_sds}\right]\right\}\right]=\left(\det{\mathbf{\Phi}}_d(t)\right)^{\alpha/2}\exp\left\{\frac{1}{2}Tr\left[\Sigma_0\mathbf{\Phi}_{d}^{+}(0)\right]\right\},
\end{align*}
where
\begin{align*}
\mathbf{\Phi}_{d}^{+}(0):=\lim_{t\searrow 0}\mathbf{\Phi}'_{d}(t).
\end{align*}
\end{proposition}

We employ this result to prove the following claim, which establishes the Cameron Martin formula for the more general drift $\tilde{b}$.

\begin{proposition}
Let $\Sigma\in WIS_d(\Sigma_0, \tilde{b}, 0, I_d)$, then
\begin{align}
\mathbb{E}\left[\exp\left\{-\frac{1}{2}Tr\left[w\Sigma_t+\int_{0}^{t}{v\Sigma_s ds}\right]\right\}\right]&=\exp\left\{-{\phi}(t)-Tr\left[{\psi}(t)\Sigma_0\right]\right\},
\end{align}
where
\begin{align*}
\psi(t)&=-\frac{\sqrt{v}k(t)}{2}\\
\phi(t)&=\frac{1}{2}Tr\left[\tilde{b}\log\left(\sqrt{{v}}^{-1}\left(\sqrt{{v}}\cosh(\sqrt{{v}}t)+{w}\sinh(\sqrt{{v}}t)\right)\right)\right]
\end{align*}
and $k(t)$ is given by
\begin{align*}
k(t)=-\left(\sqrt{{v}}\cosh(\sqrt{{v}}t)+{w}\sinh(\sqrt{{v}}t)\right)^{-1}\left(\sqrt{{v}}\sinh(\sqrt{{v}}t)+w\cosh(\sqrt{{v}}t)\right).
\end{align*}
\end{proposition}

\proof
Let us first assume that $\tilde{b}=\alpha I_d$. An application of Proposition \ref{propBru} allows us to claim that
\begin{align}
\mathbb{E}\left[\exp\left\{-\frac{1}{2}Tr\left[w\Sigma_t+\int_{0}^{t}{v\Sigma_s ds}\right]\right\}\right]&=\det\left(\cosh\left(\sqrt{v}t\right)+\sinh\left(\sqrt{v}t\right)k(t)\right)^{\frac{\alpha}{2}}\nonumber\\
&\times\exp\left\{\frac{1}{2}Tr\left[\Sigma_0\sqrt{v}k(t)\right]\right\},\label{PitYor}
\end{align}
where $k(t)$ is given by
\begin{align*}
k(t)=-\left(\sqrt{{v}}\cosh(\sqrt{{v}}t)+{w}\sinh(\sqrt{{v}}t)\right)^{-1}\left(\sqrt{{v}}\sinh(\sqrt{{v}}t)+w\cosh(\sqrt{{v}}t)\right).
\end{align*}
A direct inspection of \eqref{PitYor}, allows us to recognize the functions $\phi$ and $\psi$ in this setting. For $\psi$ we have
\begin{equation}
\psi(t) = -\frac{\sqrt{v}k(t)}{2} 
\end{equation}
which is independent of $\tilde{b}$. The corresponding system of matrix Riccati ODE is
\begin{align}
\frac{d {\psi}}{d t}&=-2{\psi}{\psi}+v\quad{\psi}(0)=w,\\
\frac{d {\phi}}{d t}&=Tr\left[\tilde{b}{\psi}(t)\right]\quad{\phi}(0)=0.
\end{align}
Given the solution for $\psi$, we can determine an alternative formulation for $\phi$ upon integration. This alternative formulation encompasses the more general constant drift too. We show the calculation in detail:
\begin{align*}
\frac{d \phi}{d t}&=Tr\left[\tilde{b}\psi(t)\right]\nonumber\\
&=Tr\left[\tilde{b}\left(-\frac{\sqrt{v}k(t)}{2}\right)\right].\nonumber
\end{align*}
Integrating the ODE yields
\begin{align*}
\phi(t)&=-\frac{1}{2}Tr\left[\tilde{b}\sqrt{{v}}\int_{0}^{t}k(s)ds\right].
\end{align*}
We concentrate on the integral appearing in the second term:
\begin{align*}
&\int_{0}^{t}k(s)ds
&=\int_{0}^{t}{-\left(\sqrt{{v}}\cosh(\sqrt{{v}}s)+{w}\sinh(\sqrt{{v}}s)\right)^{-1}\left(\sqrt{{v}}\sinh(\sqrt{{v}}s)+{w}\cosh(\sqrt{{v}}s)\right)ds}.
\end{align*}
Define $f(s)=\sqrt{{v}}\cosh(\sqrt{{v}}s)+{w}\sinh(\sqrt{{v}}s)$ and let us differentiate it:
\begin{align*}
\frac{d f}{d s}=\left(\sqrt{{v}}\sinh(\sqrt{{v}}s)+{w}\cosh(\sqrt{{v}}s)\right)\sqrt{{v}},
\end{align*}
hence we can write
\begin{align*}
\phi(t)&=\frac{1}{2}Tr\left[\tilde{b}\left(\log\left(\sqrt{{v}}\cosh(\sqrt{{v}}t)+{w}\sinh(\sqrt{{v}}t)\right)-\log\left(\sqrt{{v}}\right)\right)\right]\nonumber\\
&=\frac{1}{2}Tr\left[\tilde{b}\log\left(\sqrt{{v}}^{-1}\left(\sqrt{{v}}\cosh(\sqrt{{v}}t)+{w}\sinh(\sqrt{{v}}t)\right)\right)\right].
\end{align*}
\endproof

\paragraph{\textbf{Invariance under transformations.}}

We define the transformation $S_t=Q^{\top}\Sigma_t Q$, which is governed by the SDE:
\begin{align}
dS_t=\sqrt{S_t}d\tilde{B}_tQ+Q^{\top}d\tilde{B}_t^{\top}\sqrt{S_t}+b dt,\quad b=Q^\top \tilde{b}Q,
\label{WishartQ}\end{align}
where the process $\tilde{B}=(\tilde{B}_t)_{t\geq0}$ defined by $d\tilde{B}_t=\sqrt{S_t}^{-1}Q^{\top}\sqrt{\Sigma}d\hat{B}_t$ is easily proved to be a Brownian motion under $\tilde{\mathbb{P}}$. 


From \cite{article_Bru}, we know how to extend the Cameron Martin formula: the Laplace transform of the process $S$ may be computed as follows 
\begin{align*}
\mathbb{E}_{S_0}^{\tilde{\mathbb{P}}}\left[e^{-Tr\left[wS_t\right]}\right]&=\mathbb{E}^{\tilde{\mathbb{P}}}_{\left(Q^{\top}\right)^{-1}S_0Q^{-1}}\left[e^{-Tr\left[wQ^{\top}\Sigma Q\right]}\right]\nonumber\\
&=\mathbb{E}_{\Sigma_0}^{\tilde{\mathbb{P}}}\left[e^{-Tr\left[\left(QwQ^{\top}\right)\Sigma\right]}\right],
\end{align*}
hence we can compute the Cameron Martin formula for the process $S$ using the arguments $QwQ^\top$ and $QvQ^\top$.

\paragraph{\textbf{Inclusion of the drift - Girsanov transformation.}} The final step consists in introducing a measure change from $\tilde{\mathbb{P}}$, where the process has no mean reversion, to the measure $\mathbb{P}$ that will allow us to consider the general process governed by the dynamics in equation \eqref{Wis_dyn}. We now define a matrix Brownian motion under the probability measure $\mathbb{P}$ as follows:
\begin{align*}
B_t=\tilde{B}_t-\int_{0}^{t}{\sqrt{S_s}M^\top Q^{-1}ds}=\tilde{B}_t-\int_{0}^{t}{H_sds},
\end{align*}
for $H_s=\sqrt{S_s}M^\top Q^{-1}$. The Girsanov transformation is given by the following stochastic exponential (see e.g. \cite{matsu04}):
\begin{align*}
\left.\frac{\partial \mathbb{P}}{\partial \tilde{\mathbb{P}}}\right|_{\mathcal{F}_t}&=\exp\left\{\int_{0}^{t}{Tr\left[H^{\top}d\tilde{B}_s\right]}-\frac{1}{2}\int_{0}^{t}{Tr\left[HH^{\top}\right]ds}\right\}\nonumber\\
&=\exp\left\{\int_{0}^{t}{Tr\left[Q^{-1^{\top}}M\sqrt{S_s}d\tilde{B}_s\right]}-\frac{1}{2}\int_{0}^{t}{Tr\left[S_sM^{\top}Q^{-1}Q^{-1^{\top}}M\right]ds}\right\}.\nonumber
\end{align*}
We concentrate on the stochastic integral term, which under the parametric restriction \eqref{cond_true_m}, can be expressed as
\begin{align*}
\frac{1}{2}\int_{0}^{t}{Tr\left[\left(Q^\top Q\right)^{-1}M\left(\sqrt{S_s}d\tilde{B}_sQ+Q^\top d\tilde{B}_s^\top\sqrt{S_s}\right)\right]}=\frac{1}{2}\int_{0}^{t}{Tr\left[\left(Q^\top Q\right)^{-1}M\left(dS_s-bds\right)\right]}.
\end{align*}
In summary, the stochastic exponential may be written as
\begin{align*}
\left.\frac{\partial \mathbb{P}}{\partial \tilde{\mathbb{P}}}\right|_{\mathcal{F}_t}&=\exp\left\{Tr\left[\frac{\left(Q^\top Q\right)^{-1}M}{2}\left(S_t-S_0-bt\right)\right]-\frac{1}{2}\int_{0}^{t}{Tr\left[S_sM^{\top}Q^{-1}Q^{-1^{\top}}M\right]ds}\right\}.
\end{align*}
Under the assumption $b \succeq (d+1)Q^\top Q$ (which is a sufficient condition ensuring that the process does not hit the boundary of the cone $S_d^+$, see Corollary 3.2 in \cite{article_MPS}), using the same arguments as in \cite{Mayer2012} shows that the stochastic exponential is a true martingale.

\paragraph{\textbf{Derivation of the Matrix Cameron-Martin formula.}}
We consider the process under $\mathbb{P}$:
\begin{align*}
dS_t=\sqrt{S_t}dB_tQ+Q^{\top}dB_t^{\top}\sqrt{S_t}+\left(MS_t+S_tM^{\top}+b\right) dt.
\end{align*}
%

Recall that under $\tilde{\mathbb{P}}$, we have
\begin{align*}
dS_t=\sqrt{S_t}d\tilde{B}_tQ+Q^{\top}d\tilde{B}_t^{\top}\sqrt{S_t}+b dt,
\end{align*}
then $\Sigma_t=Q^{-1^{\top}}S_tQ^{-1}$ solves
\begin{align*}
d\Sigma_t=\sqrt{\Sigma_t}d\hat{B}_t+d\hat{B}_t^{\top}\sqrt{\Sigma_t}+\tilde{b} dt.
\end{align*}
We are now ready to apply the change of measure along the following steps:
\begin{align*}
\mathbb{E}_{S_0}^{\mathbb{P}}\left[\exp\left\{-\frac{1}{2}Tr\left[wS_t+\int_{0}^{t}{vS_sds}\right]\right\}\right]
=&\mathbb{E}_{S_0}^{\tilde{\mathbb{P}}}\left[\exp\left\{-\frac{1}{2}Tr\left[wS_t+\int_{0}^{t}{vS_sds}\right]\right.\right.\nonumber\\
&+Tr\left[\frac{\left(Q^\top Q\right)^{-1}M}{2}\left(S_t-S_0-bt\right)\right]\nonumber\\
&\left.\left.-\frac{1}{2}\int_{0}^{t}{Tr\left[S_sM^{\top}Q^{-1}Q^{-1^{\top}}M\right]ds}\right\}\right]\nonumber\\
=&\exp\left\{-Tr\left[\frac{\left(Q^\top Q\right)^{-1}M}{2}\left(S_0+bt\right)\right]\right\}\nonumber\\
&\times\mathbb{E}_{S_0}^{\tilde{\mathbb{P}}}\left[\exp\left\{-\frac{1}{2}Tr\left[\left(w-\left(Q^\top Q\right)^{-1}M\right)S_t\right.\right.\right.\nonumber\\
&\left.\left.\left.+\int_{0}^{t}{\left(v+M^{\top}Q^{-1}Q^{-1^{\top}}M\right)S_s ds}\right]\right\}\right].
\end{align*}
But $S_t=Q^{\top}\Sigma_t Q$, then:
\begin{align*}
\mathbb{E}_{S_0}^{\mathbb{P}}\left[\exp\left\{-\frac{1}{2}Tr\left[wS_t+\int_{0}^{t}{vS_sds}\right]\right\}\right]=&
\exp\left\{-Tr\left[\frac{\left(Q^\top Q\right)^{-1}M}{2}\left(S_0+bt\right)\right]\right\}\nonumber\\
&\times\mathbb{E}_{Q^{\top^{-1}}S_0Q^{-1}}^{\tilde{\mathbb{P}}}\left[\exp\left\{-\frac{1}{2}Tr\left[Q\left(w-\left(Q^\top Q\right)^{-1}M\right)Q^{\top}\Sigma_t\right.\right.\right.\nonumber\\
&\left.\left.\left.+\int_{0}^{t}{Q\left(v+M^{\top}Q^{-1}Q^{-1^{\top}}M\right)Q^{\top}\Sigma_s ds}\right]\right\}\right].
\end{align*}
The expectation may be computed via a direct application of formula \eqref{PitYor} and after some standard algebra we get the result of Theorem \ref{main}, with the obvious substitutions $v\rightarrow 2v$ and $w\rightarrow 2w$.

\paragraph{\textbf{Strip of regularity for $\mathcal{D}_t$.}}

Here we show that conditions (\ref{conditionv}) and (\ref{conditionw}) imply the boundedness of the Laplace transform for all $t\geq0$.
By theorem (3.7) of \cite{spreve2010} we know that the Laplace transform exists till the explosion time of the solution of the corresponding Riccati ODE. Knowing the explicit solution of such ODE, a sufficient condition for non explosion is that  for all $t\geq0$
\begin{align*}
h(t)= \sqrt{\bar{v}}\cosh(\sqrt{\bar{v}}t)+\bar{w}\sinh(\sqrt{\bar{v}}t)& \in GL_d.
\end{align*}

As $\bar{v}$ appears in a square root, it must be that $\bar{v}\succeq 0$. However, the inequality must be strict due to $h(0)\in GL_d$, then $\bar{v}\succ 0$, i.e. condition (\ref{conditionv}).
Now let us rewrite $h(t)$ as follows:
\begin{align*}
h(t)&= \sqrt{\bar{v}}\frac{e^{\sqrt{\bar{v}}t}+e^{-\sqrt{\bar{v}}t}}{2}+\bar{w}\frac{e^{\sqrt{\bar{v}}t}-e^{-\sqrt{\bar{v}}t}}{2}\\
&=\sqrt{\bar{v}}e^{-\sqrt{\bar{v}}t}+\frac{1}{2}(\sqrt{\bar{v}}+\bar{w})\left( e^{\sqrt{\bar{v}}t}-e^{-\sqrt{\bar{v}}t}\right).
\end{align*}
		
For $\bar{v}\succ 0$ we have that both $e^{-\sqrt{\bar{v}}t}$ and $e^{\sqrt{\bar{v}}t}-e^{-\sqrt{\bar{v}}t}$ belong to $S_d^+$ for all $t\geq0$, then $h(t)\in GL_d$ if $\sqrt{\bar{v}}+\bar{w}\succeq 0$ , which is condition (\ref{conditionw}).

\section{Alternative existing methods}
\subsection{Variation of Constants Method}

The variation of constants method represents the first solution provided in literature for the solution of the matrix ODE's (\ref{ODEpsi}) - (\ref{ODEphi}) (see e.g. \cite{article_gou01}, \cite{article_gou02}, \cite{article_gou03}) and despite its theoretical simplicity, it turns out to be very time consuming, as we will show later in the numerical exercise. This is also equivalent to the procedure followed by \cite{article_alfonsi} and \cite{Mayer_Wis} who found the Laplace transform of the Wishart process alone (i.e. corresponding to $v=0$ in (\ref{Laplace})). The proof of the following proposition is standard and it is omitted.

\begin{proposition}\label{var_const_f} The solutions for ${\psi}(t),{\phi}(t)$ in Proposition \ref{prop_1} are given by:
\begin{align}
{\psi}(t)&=\psi^{\prime}+e^{\left(M^{\top}-2\psi^{\prime}Q^{\top}Q\right)t}\Bigg[\left(w-\psi^{\prime}\right)^{-1}\Bigg.\nonumber\\
&\Bigg.+2\int_{0}^{t}{e^{\left(M-2Q^{\top}Q\psi^{\prime}\right)s}Q^{\top}Qe^{\left(M^{\top}-2\psi^{\prime}Q^{\top}Q\right)s}ds}\Bigg]^{-1}e^{\left(M-2Q^{\top}Q\psi^{\prime}\right)t},\label{PSI_VC}\\
{\phi}(t)&=Tr\left[\alpha Q^{\top}Q\int_{0}^{t}{{\psi}(s)ds}\right],
\end{align}
where $\psi^{\prime}$ is a symmetric solution to the following algebraic Riccati equation
\begin{align}
\psi^{\prime}M+M^{\top}\psi^{\prime}-2\psi^{\prime}Q^{\top}Q\psi^{\prime}+v=0.\label{algebraic}
\end{align}
\end{proposition}

\subsection{Linearization of the Matrix Riccati ODE}
The second approach we consider is the one proposed by \cite{gra08}, who used the Radon lemma in order to linearize the matrix Riccati ODE (\ref{ODEpsi}) (see also \cite{article_Levin}, \cite{yong99} and \cite{anderson}). 
\begin{proposition}(\cite{gra08})\label{linear_chap1}
The functions ${\psi}(t),{\phi}(t)$ in Proposition \ref{prop_1} are given by
\begin{equation*}
{\psi}(t)=\left(w{\psi}_{12}(t)+{\psi}_{22}(t) \right)^{-1}\left(w{\psi}_{11}(t)+{\psi}_{21}(t) \right),
\end{equation*}
\begin{equation*}
{\phi}(t)=\frac{\alpha}{2}Tr\left[log\left(w{\psi}_{12}(t)+{\psi}_{22}(t) \right)+M^{\top}t \right],
\end{equation*}
where
\begin{equation*}
\left(\begin{array}{rr}
{\psi}_{11}(t) & {\psi}_{12}(t)\\
{\psi}_{21}(t) & {\psi}_{22}(t)
\end{array}\right)
=exp\left\{ t \left( 
\begin{array}{rr}
M & 2Q^{\top}Q\\
v & -M^{\top}
\end{array}
\right) \right\}.
\end{equation*} 
\end{proposition}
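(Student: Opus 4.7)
The plan is to linearise the matrix Riccati ODE \eqref{ODEpsi} via Radon's lemma, writing $\psi(t)=D(t)^{-1}N(t)$ for two matrix-valued unknowns. Differentiating this ansatz gives $\dot\psi=D^{-1}(\dot N-\dot D\,\psi)$, and matching the result to \eqref{ODEpsi} suggests the linear companion system
\begin{align*}
\dot D &= -DM^\top + 2NQ^\top Q, \qquad D(0)=I,\\
\dot N &= NM + Dv, \qquad\qquad\qquad\ \ N(0)=w,
\end{align*}
so that $\psi(0)=w$. A direct substitution then verifies $\dot\psi=\psi M+M^\top\psi-2\psi Q^\top Q\psi+v$, so by uniqueness for the Riccati this $\psi$ is the desired solution.

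Next I would recast the linear system in block form as $(\dot D,\dot N)=(D,N)B$ with
\begin{align*}
B=\begin{pmatrix}-M^\top & v\\ 2Q^\top Q & M\end{pmatrix},
\end{align*}
and note that the block flip $P=\begin{pmatrix}0 & I\\ I & 0\end{pmatrix}$ satisfies $PAP=B$ for the matrix $A=\begin{pmatrix}M & 2Q^\top Q\\ v & -M^\top\end{pmatrix}$ in the statement. Consequently $e^{tB}=P\,e^{tA}\,P$, and applying this to the initial data $(I,w)$ yields $(D(t),N(t))=(I,w)\,P\,e^{tA}\,P=(w,I)\,e^{tA}\,P$, i.e.\ $(N(t),D(t))=(w,I)\,e^{tA}$. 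Writing the blocks of $e^{tA}$ as $(\psi_{ij}(t))_{i,j=1,2}$ produces $N=w\psi_{11}+\psi_{21}$ and $D=w\psi_{12}+\psi_{22}$, so $\psi=D^{-1}N$ reduces exactly to the announced formula.

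For ${\phi}$, I would invoke Jacobi's identity $\frac{d}{dt}\log\det D=Tr[D^{-1}\dot D]$. Substituting the linear ODE for $D$ gives
\begin{align*}
Tr[D^{-1}\dot D]=-Tr[M^\top]+2\,Tr\bigl[Q^\top Q\,\psi\bigr],
\end{align*}
so \eqref{ODEphi} rewrites as $\dot{\phi}=\frac{\alpha}{2}\frac{d}{dt}\bigl(\log\det D+Tr[M^\top]\,t\bigr)$. Integrating from $0$ and using $\det D(0)=1$ yields $\phi(t)=\frac{\alpha}{2}\bigl(\log\det D(t)+Tr[M^\top]\,t\bigr)$, which matches the claim through the identity $\log\det D=Tr[\log D]$ and the invariance of the trace under transposition of $M$.

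The main delicacy to watch is regularity: the whole argument requires the block $D(t)=w\psi_{12}(t)+\psi_{22}(t)$ to remain invertible on $[0,t]$, i.e.\ the Riccati flow must not explode before $t$. This is precisely what finiteness of the Laplace transform on $\mathcal D_t$ guarantees, so no extra hypothesis beyond those already carried from Proposition \ref{prop_1} is needed.
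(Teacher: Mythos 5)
Your proposal is correct and follows exactly the Radon-lemma linearization that the paper attributes to \cite{gra08} without reproducing the argument: the ansatz $\psi=D^{-1}N$, the associated linear companion system, and Jacobi's formula $\frac{d}{dt}\log\det D=Tr\left[D^{-1}\dot D\right]$ for ${\phi}$ are precisely the intended route, and your verification of each step checks out. The only cosmetic remark is that the detour through $B$ and the flip $P$ is unnecessary, since with the row-block convention $(N,D)(0)=(w,I)$ one has directly $(\dot N,\dot D)=(N,D)A$ for the matrix $A$ appearing in the statement.
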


\subsection{Runge-Kutta Method}
The Runge-Kutta method is a classical approach for the numerical solution of ODE's. For a detailed treatment, see e.g. \cite{book_QSS}. If we want to solve numerically the system of equations \eqref{ODEpsi} and \eqref{ODEphi}, the most commonly used Runge-Kutta scheme is the fourth order one:
\begin{align*}
{\psi}(t_{n+1})&={\psi}(t_{n})+\frac{1}{6}h\left(k_1+2k_2+2k_3+k_4\right),\nonumber\\
t_{n+1}&=t_n+h,\nonumber\\
k_1&=g(t_n,{\psi}(t_n)),\nonumber\\
k_2&=g(t_n+\frac{1}{2}h,{\psi}(t_n)+\frac{1}{2}hk_1),\nonumber\\
k_3&=g(t_n+\frac{1}{2}h,{\psi}(t_n)+\frac{1}{2}hk_2),\nonumber\\
k_4&=g(t_n+h,{\psi}(t_n)+hk_3),\nonumber
\end{align*}
where the function $g$ is given by:
\begin{align*}
g(t_n,{\psi}(t_n))&=g({\psi}(t_n))={\psi}(t_n)M+M^{\top}{\psi}(t_n)-2{\psi}(t_n)Q^{\top}Q{\psi}(t_n)+v.\nonumber
\end{align*}

\subsection{Comparison of the methods}
A formal numerical analysis of the various methods is beyond the scope of this paper. Anyhow, we would like to stress some points about the execution time, which we believe are sufficient to highlight the importance of our new methodology. Despite its importance in the academic literature, it will turn out that the variation of constants method is not suitable for applications, in particular in a calibration setting. 

First of all we  compare the results of the four different methods. We consider different time horizons $t\in\left[0,3.0\right]$ and use the following values for the parameters:
\begin{align*}
&S_0=\left(\begin{array}{cc} 0.0120&0.0010\\ 0.0010&0.0030\end{array}\right);\quad Q=\left(\begin{array}{cc} 0.141421356237310&-0.070710678118655\\0 &0.070710678118655\end{array}\right);\nonumber\\
&M=\left(\begin{array}{cc} -0.02&-0.02\\ -0.01&-0.02\end{array}\right);\quad\alpha=3;\nonumber\\
&v=\left(\begin{array}{cc} 0.1000&0.0400\\ 0.0400&0.1000\end{array}\right);\quad w=\left(\begin{array}{cc} 0.1100&0.0300\\ 0.0300&0.1100\end{array}\right).
\end{align*}

The value for $Q$ was obtained along the following steps: given a matrix $A\in S_d^+$ such that $AM=M^\top A$, we compute its inverse and let $Q$ be obtained from a Cholesky factorization of this inverted matrix.

Table \ref{tab:NumericalExperiment} shows the value of the moment generating function for different values of the time horizon $t$. The four methods lead to values which are very close to each other, and this constitutes a first test proving that the new methodology produces correct results. Let us now consider another important point issue, namely the execution speed. In order to obtain a good degree of precision for the variation of constants method, we were forced to employ a fine integration grid. This results in a poor performance of this method in terms of speed. In Figure \ref{fig:calculation_time} we compare the time spent by the three analytical methods for the calculation of the moment generating function. As $t$ gets larger, the execution time for the variation of constants method grows exponentially, whereas the time required by the linearization and the new methodology is the same. The Runge-Kutta method is a numerical solution to the problem, so the real competitors of our methodology are the variation of constants and the linearization method. 

Finally, we compared the linearization of the Riccati ODE to the new methodology. In terms of precision and execution speed the two methodologies seem to provide the same performance, up to the fourteenth digit. This shows that, under the parametric restriction of Theorem \ref{main} our methodology represents a valid alternative. The results are illustrated in Table \ref{tab:NumericalExperiment2} up to the maturity $t=100$.
%

\section{Applications}
\subsection{Pricing of derivatives}
The knowledge of the functional form of the Laplace transform represents an important tool for the application of a stochastic model in mathematical finance. In the following, we will provide two examples of asset pricing models whose Laplace transform is of exponentially affine form and such that our previous results may be applied. The first one is the model proposed by \cite{article_DaFonseca1} which describes the evolution of a single asset, whose instantaneous volatility is modelled by means of a Wishart process. The second is the model introduced in \cite{article_DaFonseca2}, where the evolution of a vector of assets is described by a vector-valued SDE where the Wishart process models the instantaneous variance-covariance matrix of the assets. 

\subsubsection{A stochastic volatility model}
In this subsection we consider the model proposed in \cite{article_DaFonseca1} and we derive the explicit Laplace transform of the log-price using our new methodology. As a starting point, we report the dynamics defining the model:
\begin{align*}
\frac{dX_t}{X_t}&=Tr\left[\sqrt{S_t}\left(dW_tR^\top+dB_t\sqrt{I_d-RR^\top}\right)\right],\\
dS_t&=\left(\alpha Q^{\top}Q+MS_t+S_tM^{\top}\right)dt+\sqrt{S_t}dW_tQ+Q^{\top}dW_t^{\top}\sqrt{S_t},
\end{align*}
where $X_t$ denotes the price of the underlying asset, and the Wishart process acts as a multifactor source of stochastic volatility. $W$ and $B$ are independent matrix Brownian motions and the matrix $R$ parametrizes all possible correlation structures preserving the affinity. This model is a generalization of the (multi-)Heston model, see \cite{Heston93} and \cite{ChristHest}, and it offers a very rich structure for the modelization of stochastic volatilities as the factors governing the instantaneous variance are non-trivially correlated. It is easy to see that the log-price $Y$ is given as
\begin{align*}
dY=-\frac{1}{2}Tr\left[S_t\right]dt+Tr\left[\sqrt{S_t}\left(dW_tR^\top+dB_t\sqrt{I_d-RR^\top}\right)\right].
\end{align*}

We are interested in the Laplace transform of the log-price, i.e.
\begin{align*}
\varphi_t(\tau,-\omega)=\mathbb{E}\left[e^{-\omega Y_T}\left|\mathcal{F}_t\right.\right],\quad \tau:=T-t.
\end{align*}
This expectation satisfies a backward Kolmogorov equation, see \cite{article_DaFonseca1} for a detailed derivation. Since the process $S=(S_t)_{0\leq t\leq T}$ is affine, we make a guess of a solution of the form
\begin{align*}
\varphi_t(\tau,-\omega)=\exp\left\{-\omega \ln X_t -{\phi}(\tau)-Tr\left[{\psi}(\tau) S_t\right]\right\}.
\end{align*}
By substituting it into the PDE, we obtain the system of ODE's
\begin{align}
\frac{d {\psi}}{d \tau}&={\psi}\left(M-\omega Q^\top R^\top\right)+\left(M^{\top}-\omega RQ\right){\psi}-2{\psi}Q^{\top}Q{\psi}-\frac{\omega^2+\omega}{2}I_d\label{ODEpsiSV},\\
{\psi}(0)&=0,\\
\frac{d {\phi}}{d \tau}&=Tr\left[\alpha Q^{\top}Q{\psi}(\tau)\right]\label{ODEphiSV},\\
{\phi}(0)&=0.
\end{align}
If we look at the first ODE, we recognize the same structure as in \eqref{ODEpsi}: instead of $M$ and $v$ we have respectively $M-\omega Q^\top R^\top$ and $-\frac{\omega^2+\omega}{2}I_d$. This means that we can rewrite the solution for ${\psi}$ as
\begin{align*}
{\psi}(\tau)&=\frac{\left(Q^\top Q\right)^{-1}\left(M-\omega Q^\top R^\top \right)}{2}-\frac{Q^{-1}\sqrt{\bar{v}}kQ^{\top^{-1}}}{2},\\
{\phi}(\tau)&=-\frac{\alpha}{2}\log\left(\det\left(e^{-\left(M-\omega Q^\top R^\top\right)\tau}\left(\cosh(\sqrt{\bar{v}\tau})+\sinh(\sqrt{\bar{v}\tau})k\right)\right)\right),\nonumber\\
\bar{v}&=Q\left(2\left(-\frac{\omega^2+\omega}{2}I_d\right)+\left(M^\top-\omega RQ\right) Q^{-1}Q^{\top^{-1}}\left(M-\omega Q^\top R^\top \right)\right)Q^\top,\nonumber\\
\bar{w}&=Q\left(-\left(Q^\top Q\right)^{-1}\left(M-\omega Q^\top R^\top \right)\right)Q^\top,\nonumber\\
k&=-\left(\sqrt{\bar{v}}\cosh(\sqrt{\bar{v}}\tau)+\bar{w}\sinh(\sqrt{\bar{v}}\tau)\right)^{-1}\left(\sqrt{\bar{v}}\sinh(\sqrt{\bar{v}}\tau)+\bar{w}\cosh(\sqrt{\bar{v}}\tau)\right).
\end{align*}

Condition \eqref{cond_true_m} in this setting has the following form
\begin{align}
\left(M-\omega Q^\top R^\top\right)^\top\left(Q^\top Q\right)^{-1}=\left(Q^\top Q\right)^{-1}\left(M-\omega Q^\top R^\top\right).
\end{align}
For fixed $\omega$ we can express the condition  above via the following system
\begin{align}
\left\{\begin{array}{l}
M^\top \left(Q^\top Q\right)^{-1}=\left(Q^\top Q\right)^{-1}M,\\
RQ\left(Q^\top Q\right)^{-1}=\left(Q^\top Q\right)^{-1}Q^\top R^\top .	
\end{array}\right. 
\end{align}

\subsubsection{A stochastic correlation model}

In this subsection we consider the model introduced in \cite{article_DaFonseca2}. This model belongs to the class of multi-variate affine volatility models, for which many interesting theoretical results have been presented in \cite{Cuchiero_Phd}. In this framework we consider a vector of prices together with a stochastic variance-covariance matrix. 
\begin{align*}
dX_t&=Diag(X_t)\sqrt{S_t}\left( dW_t\rho +\sqrt{1-\rho^\top \rho}dB_t\right),\nonumber\\
dS_t&=\left(\alpha Q^{\top}Q+MS_t+S_tM^{\top}\right)dt+\sqrt{S_t}dW_tQ+Q^{\top}dW_t^{\top}\sqrt{S_t},
\end{align*}
where now the vector Brownian motion $Z= W_t\rho +\sqrt{1-\rho^\top \rho}B_t$ is correlated with the matrix Brownian motion $W$ through the correlation vector $\rho$. Using the same arguments as before, we compute the joint conditional Laplace transform of the vector of the log-prices $Y_T=\log(X_T)$
\begin{align*}
\varphi_t(\tau,-\omega)=\mathbb{E}\left[e^{-\omega^\top Y_T}\left|\mathcal{F}_t\right.\right],\quad \tau:=T-t.
\end{align*}

The affine property allows us to write the associated system of matrix Riccati ODE's (see \cite{article_DaFonseca2} for more details), which is given as
\begin{align}
\frac{d {\psi}}{d \tau}&={\psi}\left(M-Q^\top \rho \omega^\top\right)+\left(M^{\top}-\omega \rho^\top Q\right){\psi}-2{\psi}Q^{\top}Q{\psi}\nonumber\\
&-\frac{1}{2}\left(\sum_{i=1}^{d}{\omega_i e_{ii}}+\omega^\top\omega\right)I_d,\label{ODEpsiSC}\\
{\psi}(0)&=0,\\
\frac{d {\phi}}{d \tau}&=Tr\left[\alpha Q^{\top}Q{\psi}(\tau)\right]\label{ODEphiSC},\\
{\phi}(0)&=0.
\end{align}
We recognize the same structure as in Equations \eqref{ODEphi} and \eqref{ODEpsi} where instead of $M$ and $v$, we now have $M-Q^\top \rho \omega^\top$ and $-\frac{1}{2}\left(\sum_{i=1}^{d}{\omega_i e_{ii}}+\omega^\top\omega\right)I_d$ respectively. Consequently, we can compute the solution as
\begin{align*}
{\psi}(\tau)&=\frac{\left(Q^\top Q\right)^{-1}\left(M-Q^\top \rho \omega^\top\right)}{2}-\frac{Q^{-1}\sqrt{\bar{v}}kQ^{\top^{-1}}}{2},\\
{\phi}(\tau)&=-\frac{\alpha}{2}\log\left(\det\left(e^{-\left(M-Q^\top \rho \omega^\top\right)\tau}\left(\cosh(\sqrt{\bar{v}\tau})+\sinh(\sqrt{\bar{v}\tau})k\right)\right)\right),\nonumber\\
\bar{v}&=Q\left(2\left(-\frac{1}{2}\left(\sum_{i=1}^{d}{\omega_i e_{ii}}+\omega^\top\omega\right)I_d\right)+\left(M^{\top}-\omega \rho Q\right) Q^{-1}Q^{\top^{-1}}\left(M-Q^\top \rho \omega^\top\right)\right)Q^\top,\nonumber\\
\bar{w}&=Q\left(-\left(Q^\top Q\right)^{-1}\left(M-Q^\top \rho \omega^\top\right)\right)Q^\top,\nonumber\\
k&=-\left(\sqrt{\bar{v}}\cosh(\sqrt{\bar{v}}\tau)+\bar{w}\sinh(\sqrt{\bar{v}}\tau)\right)^{-1}\left(\sqrt{\bar{v}}\sinh(\sqrt{\bar{v}}\tau)+\bar{w}\cosh(\sqrt{\bar{v}}\tau)\right).
\end{align*}
Condition \eqref{cond_true_m} is rephrased in this setting as follows
\begin{align}
\left(M^\top-\omega\rho^\top Q\right)\left(Q^\top Q\right)^{-1}=\left(Q^\top Q\right)^{-1}\left(M-Q^\top\rho\omega^\top\right),
\end{align}
which may be expressed as
\begin{align}
\left\{
\begin{array}{l}
M^\top\left(Q^\top Q\right)^{-1}=\left(Q^\top Q\right)^{-1}M,\\
\omega\rho^\top Q^{\top^{-1}}=Q^{-1}\rho\omega^\top .	
\end{array}
\right.
\end{align}

This means that the two products have to be symmetric matrices.

\subsubsection{A short rate model}
Our methodology for the computation of the Laplace transform may be directly employed to provide a closed form formula for the price of zero coupon bonds when the short rate is driven by a Wishart process. The Wishart short rate model has been studied in \cite{article_gou02}, \cite{gra08}, \cite{article_BCT}, \cite{article_Chiarella} and \cite{gnoatto2012}. The short rate is modeled as
\begin{equation}
r_t=a+Tr\left[vS_t\right],\label{shortRate}
\end{equation}
where $a\in\mathbb{R}_{\geq 0}$, $v$ is a symmetric positive definite matrix and $S=\left(S_t\right)_{t\geq0}$ is the Wishart process. Standard arbitrage arguments allow us to claim that the price of a zero coupon bond at time $t$ with time to maturity $\tau:=T-t$, denoted by $P_t(\tau)$, is given by the following expectation
\begin{align}
P_t(\tau):&=\mathbb{E}\left[e^{-\int_{t}^{T}{a+Tr\left[vX_{u}\right]du}}|\mathcal{F}_t\right]\nonumber\\
&=\exp\left\{-\phi(\tau)-Tr\left[\psi(\tau)X_t\right]\right\},
\end{align}
where the associated ODE are
\begin{equation}
\frac{\partial \phi}{\partial \tau}=Tr\left[\alpha Q^{\top}Q\psi(\tau)\right]+a,\quad \phi(0)=0,\label{ODE1}
\end{equation}
and
\begin{align}
\frac{\partial \psi}{\partial \tau}&=\psi(\tau)M+M^{\top}\psi(\tau)-2\psi(\tau)Q^{\top}Q\psi(\tau)+v, \quad \psi(0)=0.\label{ODE2}
\end{align}
We can employ again the Cameron-Martin formula and write the solution to the system as follows
\begin{align}
{\psi}(\tau)&=\frac{\left(Q^\top Q\right)^{-1}M}{2}-\frac{Q^{-1}\sqrt{\bar{v}}kQ^{\top^{-1}}}{2},\\
{\phi}(\tau)&=-\frac{\alpha}{2}\log\left(\det\left(e^{-M\tau}\left(\cosh(\sqrt{\bar{v}}\tau)+\sinh(\sqrt{\bar{v}}\tau)k\right)\right)\right)+a\tau.
\end{align}

\subsection{A solution to the algebraic Riccati equation \eqref{algebraic}}
As an application  of our result of independent interest we look at the problem of computing a solution to the algebraic Riccati equation (ARE) \eqref{algebraic}. This equation is well known from control theory and only numerical methods are available for computing its solution. We will construct a solution to the ARE by comparing the solution of the system of differential equations in Proposition \ref{prop_1} obtained according to our new methodology and the variation of constant approach.
For convenience, rewrite the system of ODE's \eqref{ODEpsi} \eqref{ODEphi} as follows
\begin{align}
\frac{d {\psi}}{d t}&=\mathcal{R\left(\psi\right)},\quad\psi(0)=w,\\
\frac{d {\phi}}{d t}&=\mathcal{F}\left(\psi\right),\quad\phi(0)=0.
\end{align}
An ARE is given by
\begin{align}
\mathcal{R\left(\psi^\prime\right)}=0.
\end{align}
As before, we denote by $\psi^\prime$ a solution to this equation.

\begin{lemma}\label{lemmaTanh}
Let $O\in S_d^+$, define
\begin{align}
\sinh(O\tau)=\frac{e^{O\tau}-e^{-O\tau}}{2},\quad \cosh(O\tau)=\frac{e^{O\tau}+e^{-O\tau}}{2}
\end{align}
and
\begin{align}
\tanh(O\tau)=\left(\cosh(O\tau)\right)^{-1}\sinh(O\tau),\quad\coth(O\tau)=\left(\sinh(O\tau)\right)^{-1}\cosh(O\tau),
\end{align}
then
\begin{align}
\lim_{\tau\to\infty}{\tanh(O\tau)}=\lim_{\tau\to\infty}{\coth(O\tau)}=I_d.
\end{align}
\end{lemma}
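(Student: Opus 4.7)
The plan is to reduce both limits to the single statement $e^{-2O\tau}\to 0$, which will follow from Lemma \ref{lemma1}. Throughout, I would read the hypothesis $O\in S_d^+$ as meaning $O$ is symmetric and strictly positive definite, since otherwise $\sinh(O\tau)$ is singular along the kernel of $O$ and $\coth(O\tau)$ is not even well-defined.

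First I would factor $e^{O\tau}$ out of the definitions of $\cosh$ and $\sinh$, using that all matrix exponentials in $O$ commute and $e^{O\tau}$ is invertible. Writing $\cosh(O\tau)=\tfrac12 e^{O\tau}(I_d+e^{-2O\tau})$ and $\sinh(O\tau)=\tfrac12 e^{O\tau}(I_d-e^{-2O\tau})$, the $e^{O\tau}$ factors cancel in the quotients and one obtains the clean representations $\tanh(O\tau)=(I_d+e^{-2O\tau})^{-1}(I_d-e^{-2O\tau})$ and $\coth(O\tau)=(I_d-e^{-2O\tau})^{-1}(I_d+e^{-2O\tau})$. This is the main algebraic step and it is routine.

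Next I would apply Lemma \ref{lemma1} with $A=-2O$. Because $O$ is symmetric positive definite, its spectrum lies in $(0,\infty)$, so $\sigma(-2O)\subset(-\infty,0)$ and in particular $\Re(\lambda)<0$ for every $\lambda\in\sigma(-2O)$. Lemma \ref{lemma1} then yields $\lim_{\tau\to\infty}e^{-2O\tau}=0$.

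Finally I would pass to the limit in the two quotient expressions. For $\tau$ large enough, $\|e^{-2O\tau}\|<1$, so both $I_d\pm e^{-2O\tau}$ are invertible via Neumann series, and matrix inversion is continuous on the open set of invertible matrices. Therefore both $\tanh(O\tau)$ and $\coth(O\tau)$ converge to $I_d^{-1}\cdot I_d=I_d$, as claimed. The only genuine subtlety is the interpretation of $S_d^+$ discussed above; no deeper calculation is required.
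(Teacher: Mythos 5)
Your proposal is correct and follows essentially the same route as the paper's own (very terse) proof, which also rewrites $\tanh(O\tau)$ as $\left(I_d+e^{-2O\tau}\right)^{-1}\left(I_d-e^{-2O\tau}\right)$ and invokes Lemma \ref{lemma1} with $A=-2O$. You merely supply the details the paper omits (invertibility of $I_d\pm e^{-2O\tau}$ for large $\tau$, continuity of inversion, and the correct reading of $O\in S_d^+$ as strict positive definiteness), all of which are sound.
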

\begin{proof}

Let $A\in M_{d}$. If $\Re(\lambda(A))<0,\quad\forall\lambda\in\sigma(A)$, then it is well known that
\begin{equation}
\lim_{\tau\to\infty}{e^{A\tau}}=0\in M_{d\times d}.
\end{equation}
As a consequence we have that
\begin{align}
\lim_{\tau\to\infty}{\tanh(O\tau)}=\lim_{\tau\to\infty}{\left(I_d+e^{-2O\tau}\right)^{-1}\left(I_d-e^{-2O\tau}\right)}=I_d.\label{limitTanh}
\end{align}
The second equality follows along the same lines.
\end{proof}

Let us recall some well known results from control theory. We refer to the review article by \cite{kucera73}. Let us write $v=C^\top C$. We introduce the following notions.
\begin{itemize}
\item The pair $\left(M,Q\right)$ is said to be stabilizable if $\exists$ a matrix $L$ such that $M+QL$ is stable, i.e. all eigenvalues are negative.
\item The pair $\left(C,M\right)$ is said to be detectable if $\exists$ a matrix $F$ stuch that $FC+M$ is stable.
\end{itemize}
We introduce again the matrix $M-2Q^\top Q\psi^\prime$ and call it the closed loop system matrix. A classical result is the following.

\begin{theorem}\label{existence_uniqueness}
Stabilizability of $\left(M,Q\right)$ and detectability of $\left(C,M\right)$ is necessary and sufficient for the ARE to have a unique non-negative solution which makes the closed loop system matrix stable.
\end{theorem}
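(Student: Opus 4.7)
The plan is to identify Theorem~\ref{existence_uniqueness} as the classical existence and uniqueness statement for the algebraic Riccati equation \eqref{algebraic} arising in the infinite-horizon linear-quadratic regulator (LQR), and to deduce it from standard LQR theory. Writing $v=C^{\top}C$ (admissible since $v\in S_d^{+}$), one associates to \eqref{algebraic} the auxiliary deterministic control system
\[
\dot x(t)=Mx(t)+Q^{\top}u(t),\qquad J(x_0,u)=\int_{0}^{\infty}\bigl(\|Cx(t)\|^{2}+2\|u(t)\|^{2}\bigr)dt,
\]
whose Hamilton--Jacobi--Bellman equation is precisely \eqref{algebraic}, with optimal state feedback $u^{*}=-2Q\psi' x$ and closed-loop dynamics governed by $M_{cl}:=M-2Q^{\top}Q\psi'$.

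For the sufficiency direction, stabilizability of $(M,Q)$ (equivalently, since $Q\in GL_d$, of $(M,Q^{\top})$) furnishes a linear feedback producing an exponentially decaying state, so $J(x_0,u)<\infty$ for every $x_0$. Hence $V(x_0):=\inf_u J(x_0,u)=:x_0^{\top}\psi' x_0$ is a finite, convex, non-negative quadratic form with $\psi'\in S_d^{+}$, and the Bellman principle shows that $\psi'$ solves \eqref{algebraic}. Finiteness of $J(x_0,u^{*})$ forces the optimal output $Cx^{*}(t)$ and control $u^{*}(t)$ to be square-integrable on $[0,\infty)$; detectability of $(C,M)$ is then used to upgrade this to exponential decay of $x^{*}(t)$, equivalently stability of $M_{cl}$. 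Uniqueness among non-negative stabilizing solutions follows by rewriting \eqref{algebraic} as the Lyapunov equation
\[
M_{cl}^{\top}\psi'+\psi' M_{cl}=-\bigl(v+2\psi' Q^{\top}Q\psi'\bigr),
\]
which, for a fixed stable $M_{cl}$, has a unique solution in $S_d$.

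For the necessity direction, assume $\psi'\in S_d^{+}$ solves \eqref{algebraic} with $M_{cl}$ stable. Since $Q\in GL_d$, the choice $L=-2Q^{-1}(Q^{\top}Q)\psi'$ yields $M+QL=M_{cl}$, exhibiting stabilizability of $(M,Q)$. For detectability, an alleged unobservable unstable eigenvector $\xi$ of $M$ (i.e.\ $M\xi=\lambda\xi$, $C\xi=0$, $\Re\lambda\geq 0$) would, after sandwiching \eqref{algebraic} between $\xi^{*}$ and $\xi$, yield $2\Re\lambda\,\xi^{*}\psi'\xi=2\|Q\psi'\xi\|^{2}$; a standard PBH-type argument then forces $Q\psi'\xi=0$, so that $M_{cl}\xi=\lambda\xi$, contradicting the stability of $M_{cl}$.

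The main obstacle is the detectability-driven step in the sufficiency argument, namely the passage from square-integrability of the optimal output $Cx^{*}$ to exponential decay of the full state $x^{*}$, which requires a careful Popov--Belevitch--Hautus or Lyapunov-type spectral argument rather than a naive $L^2$ bound. A self-contained treatment of all four implications, together with the relevant spectral technicalities, is provided in the classical reference \cite{kucera73} cited in the text.
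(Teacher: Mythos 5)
The paper does not prove Theorem~\ref{existence_uniqueness} at all: it is imported verbatim as ``a classical result'' with a pointer to the survey \cite{kucera73}, so any argument you supply is additional to the text. Your route --- reading \eqref{algebraic} as the algebraic Riccati equation of an infinite-horizon LQR problem with $v=C^{\top}C$, getting existence of a non-negative solution from finiteness of the value function under stabilizability, stability of the closed loop from detectability, and necessity by exhibiting a stabilizing feedback and a PBH argument --- is exactly the standard one in the cited literature, so the overall strategy is the right one. (A cosmetic point: to reproduce the coefficient $2$ in $-2\psi'Q^{\top}Q\psi'$ the control weight must be $\tfrac12\|u\|^{2}$, not $2\|u\|^{2}$; with your normalization the quadratic term comes out as $-\tfrac12\psi'Q^{\top}Q\psi'$.)

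Two steps, however, do not hold up as written. First, the uniqueness argument is circular: you fix ``a stable $M_{cl}$'' and invoke uniqueness of the Lyapunov solution, but two distinct stabilizing solutions $\psi'_{1},\psi'_{2}$ determine two \emph{different} closed-loop matrices, so there is no single Lyapunov equation they both satisfy. The correct argument subtracts the two copies of \eqref{algebraic} to obtain the Sylvester equation $\left(M-2Q^{\top}Q\psi'_{1}\right)^{\top}\Delta+\Delta\left(M-2Q^{\top}Q\psi'_{2}\right)=0$ for $\Delta=\psi'_{1}-\psi'_{2}$; since both closed-loop matrices are stable their spectra and the negatives of each other's spectra are disjoint, forcing $\Delta=0$. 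Second, in the necessity-of-detectability step the identity $2\Re\lambda\,\xi^{*}\psi'\xi=2\|Q\psi'\xi\|^{2}$ does \emph{not} force $Q\psi'\xi=0$ when $\Re\lambda>0$: both sides are then non-negative and can be strictly positive, so the contradiction only comes out directly in the case $\Re\lambda=0$ (where the right-hand side must vanish and, $Q$ being invertible, $\psi'\xi=0$, whence $M_{cl}\xi=\lambda\xi$). The case $\Re\lambda>0$ needs a separate argument (e.g.\ the integral representation of $\psi'$ coming from the Lyapunov form of \eqref{algebraic} with $M_{cl}$ stable), which is precisely the ``spectral technicality'' you defer to \cite{kucera73}; as it stands that deferral is carrying the weight of the proof. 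It is also worth noting that in this paper $Q\in GL_d$, so stabilizability of $(M,Q)$ is automatic, and when $v$ is positive definite so is observability of $(C,M)$ --- the hypotheses are vacuously satisfied in all the applications considered.
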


Now, looking at the variation of constant approach we can prove the next result.

\begin{corollary}\label{cor_infty}
\begin{align}
\lim_{\tau\to\infty}{{\psi}(\tau)}=\psi^\prime.
\end{align}
\end{corollary}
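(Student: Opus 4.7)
The plan is to read the explicit variation-of-constants formula \eqref{PSI_VC} of Proposition \ref{var_const_f}, exploit the stability of the closed-loop matrix $M-2Q^\top Q\psi^\prime$, and show that the "correction term" added to $\psi^\prime$ is sandwiched between two decaying matrix exponentials while the middle factor stays bounded and invertible.

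First, set $A:=M-2Q^\top Q\psi^\prime$. By Theorem \ref{existence_uniqueness} (and the standing spectral assumption in Lemma \ref{lemmaARE}), every eigenvalue of $A$ has strictly negative real part. Since $\psi^\prime$ and $Q^\top Q$ are symmetric, one checks
\begin{align*}
(M^\top-2\psi^\prime Q^\top Q)=(M-2Q^\top Q\psi^\prime)^\top=A^\top,
\end{align*}
so that the spectrum of $A^\top$ also lies in the open left half-plane. Formula \eqref{PSI_VC} can therefore be rewritten compactly as
\begin{align*}
\tilde{\psi}(\tau)=\psi^\prime+e^{A^\top\tau}\bigl[(w-\psi^\prime)^{-1}+2G(\tau)\bigr]^{-1}e^{A\tau},\qquad G(\tau):=\int_0^\tau e^{As}Q^\top Q\,e^{A^\top s}\,ds.
\end{align*}

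Next, I would show that the bracketed matrix has a well-defined, invertible limit. Because $A$ is stable and $Q^\top Q$ is positive definite (since $Q\in GL_d$), the integrand of $G(\tau)$ is exponentially decaying in operator norm, so $G(\tau)\to G_\infty:=\int_0^\infty e^{As}Q^\top Q\,e^{A^\top s}ds$, and $G_\infty$ is the unique positive definite solution of the Lyapunov equation $AG_\infty+G_\infty A^\top+Q^\top Q=0$. In particular $(w-\psi^\prime)^{-1}+2G(\tau)$ converges to the invertible matrix $(w-\psi^\prime)^{-1}+2G_\infty$, so the bracketed inverse converges to a finite limit and is uniformly bounded in operator norm for all sufficiently large $\tau$.

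Finally, Lemma \ref{lemma1} gives $\|e^{A\tau}\|\to 0$ and $\|e^{A^\top\tau}\|\to 0$ as $\tau\to\infty$. Combining this exponential decay with the boundedness of the middle factor shows that the entire correction term vanishes in norm, hence $\tilde{\psi}(\tau)\to\psi^\prime$. The only delicate point is to justify that $(w-\psi^\prime)^{-1}+2G(\tau)$ remains invertible on the whole ray, which follows either from the smooth invertibility of its limit together with continuity, or a priori from the fact that $\tilde\psi$ is the regular Riccati flow and \eqref{PSI_VC} is exactly its closed form; this is the step that would need a few careful lines in a polished write-up.
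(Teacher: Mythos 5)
Your argument is essentially the same as the paper's: both rest on the stability of the closed-loop matrix $M-2Q^{\top}Q\psi^{\prime}$ (via Theorem \ref{existence_uniqueness}), the convergence of the integral in \eqref{PSI_VC}, and the decay $e^{(M-2Q^{\top}Q\psi^{\prime})\tau}\to 0$ from Lemma \ref{lemma1}. Your write-up is in fact more careful than the paper's one-line proof, since you make the Lyapunov-equation limit $G_{\infty}$ explicit and correctly flag the one point both arguments leave implicit, namely that $(w-\psi^{\prime})^{-1}+2G(\tau)$ must stay invertible with a bounded inverse along the whole ray.
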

\begin{proof}
Under the assumptions of Theorem \ref{existence_uniqueness}, we have $\lambda\left(M-2Q^{\top}Q\psi^\prime\right)<0$, $\forall \lambda \in \sigma\left(M-2Q^{\top}Q\psi^\prime\right)$ hence we know that the integral in \eqref{PSI_VC}, the solution for $\psi$, is convergent, moreover we know that $e^{\left(M-2Q^{\top}Q\psi^\prime\right)\tau}\searrow 0$ as $\tau\to\infty$, hence the proof is complete.
\end{proof}

This last corollary tells us that the function $\psi$ tends to a stability point of the Riccati ODE. This allows us to claim that, as $\tau\to\infty$, we have $\mathcal{R}\left(\psi(\tau)\right)\searrow 0$.
A nice consequence of this fact is that we are able to provide a new representation for $\psi^\prime$, which constitutes another application of the Cameron-Martin approach.

\begin{proposition}
The value of $\psi^\prime$ in Corollary \ref{cor_infty} admits the following representation
\begin{align}
\psi^\prime=\frac{Q^{-1}\sqrt{\bar{v}}Q^{\top^{-1}}}{2}+\frac{\left(Q^\top Q\right)^{-1}M}{2}.
\end{align}
\end{proposition}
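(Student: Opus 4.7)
The plan is to combine Corollary \ref{cor_infty} with the explicit formula for $\psi(\tau)$ given in Remark \ref{rem_phi_psi}. By Corollary \ref{cor_infty}, $\psi^\prime = \lim_{\tau\to\infty}\psi(\tau)$, so I need only evaluate
\begin{align*}
\lim_{\tau\to\infty}\psi(\tau) = \frac{(Q^\top Q)^{-1}M}{2} - \frac{Q^{-1}\sqrt{\bar{v}}\,(\lim_{\tau\to\infty}k)\,Q^{\top^{-1}}}{2}.
\end{align*}
Comparing with the target identity, the whole problem reduces to showing $\lim_{\tau\to\infty}k = -I_d$.

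To compute this limit I would rewrite $k$ in a form suitable for applying Lemma \ref{lemmaTanh}. The key observation is that $\sqrt{\bar v},\,\cosh(\sqrt{\bar v}\tau),\,\sinh(\sqrt{\bar v}\tau),\,\tanh(\sqrt{\bar v}\tau)$ are all functions of $\bar v$ and therefore commute pairwise, while $\bar w$ does not commute with them. Using $\sinh(\sqrt{\bar v}\tau)=\tanh(\sqrt{\bar v}\tau)\cosh(\sqrt{\bar v}\tau)=\cosh(\sqrt{\bar v}\tau)\tanh(\sqrt{\bar v}\tau)$, I can factor $\cosh(\sqrt{\bar v}\tau)$ out on the right of both the bracket being inverted and the other bracket:
\begin{align*}
\sqrt{\bar v}\cosh(\sqrt{\bar v}\tau)+\bar w\sinh(\sqrt{\bar v}\tau) &= \bigl[\sqrt{\bar v}+\bar w\tanh(\sqrt{\bar v}\tau)\bigr]\cosh(\sqrt{\bar v}\tau),\\
\sqrt{\bar v}\sinh(\sqrt{\bar v}\tau)+\bar w\cosh(\sqrt{\bar v}\tau) &= \bigl[\sqrt{\bar v}\tanh(\sqrt{\bar v}\tau)+\bar w\bigr]\cosh(\sqrt{\bar v}\tau).
\end{align*}
Substituting these into the definition of $k$ gives
\begin{align*}
k = -\cosh(\sqrt{\bar v}\tau)^{-1}\bigl[\sqrt{\bar v}+\bar w\tanh(\sqrt{\bar v}\tau)\bigr]^{-1}\bigl[\sqrt{\bar v}\tanh(\sqrt{\bar v}\tau)+\bar w\bigr]\cosh(\sqrt{\bar v}\tau).
\end{align*}

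The last step is to pass to the limit. By Lemma \ref{lemmaTanh}, $\tanh(\sqrt{\bar v}\tau)\to I_d$ as $\tau\to\infty$, so both of the bracketed factors converge to $\sqrt{\bar v}+\bar w$, and their product therefore converges to the identity; conjugation by $\cosh(\sqrt{\bar v}\tau)$ does not affect this limit because the inner matrix approaches $I_d$. Hence $k\to -I_d$, and substituting back yields
\begin{align*}
\psi^\prime = \frac{(Q^\top Q)^{-1}M}{2}+\frac{Q^{-1}\sqrt{\bar v}\,Q^{\top^{-1}}}{2},
\end{align*}
which is the claim. The only delicate point in the plan is the factorization step, since one must avoid the temptation to commute $\bar w$ past hyperbolic functions of $\bar v$; using only commutativity within the algebra generated by $\bar v$ suffices, but the order of factors must be tracked carefully. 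Once the factorization is in place, the limit is essentially automatic from Lemma \ref{lemmaTanh}, provided one also checks that $\sqrt{\bar v}+\bar w$ is invertible, which follows from the hypotheses of Theorem \ref{existence_uniqueness} guaranteeing existence of the stabilizing solution $\psi^\prime$.
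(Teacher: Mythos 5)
Your proposal is correct and follows essentially the same route as the paper: invoke Corollary \ref{cor_infty} and the explicit form of $\psi$, reduce everything to showing $\lim_{\tau\to\infty}k=-I_d$, factor the hyperbolic terms so that Lemma \ref{lemmaTanh} applies, and conclude. The only cosmetic difference is the factorization (you pull $\cosh(\sqrt{\bar v}\tau)$ out of both brackets, yielding a conjugation by $\cosh$, whereas the paper pulls out $\cosh$ and $\sinh$ respectively to produce $\tanh$ and $\coth$); your explicit remarks on the invertibility of $\sqrt{\bar v}+\bar w$ and on not commuting $\bar w$ past functions of $\bar v$ are points the paper leaves implicit.
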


\begin{proof}
On the basis of Theorem \ref{main}, we want to compute
\begin{align}
\lim_{\tau\to\infty}-\frac{Q^{-1}\sqrt{\bar{v}}k(\tau)Q^{\top^{-1}}}{2}+\frac{\left(Q^\top Q\right)^{-1}M}{2}.
\end{align}
To perform the computation, it is sufficient to calculate
\begin{align}
&\lim_{\tau\to\infty}k(\tau)\nonumber\\
&=\lim_{\tau\to\infty}-\left(\sqrt{\bar{v}}\cosh(\sqrt{\bar{v}}\tau)+\bar{w}\sinh(\sqrt{\bar{v}}\tau)\right)^{-1}\left(\sqrt{\bar{v}}\sinh(\sqrt{\bar{v}}\tau)+\bar{w}\cosh(\sqrt{\bar{v}}\tau)\right)\nonumber\\
&=\lim_{\tau\to\infty}-\left(\cosh\sqrt{\bar{v}}\tau\right)^{-1}\left(\sqrt{\bar{v}}+\bar{w}\tanh\sqrt{\bar{v}}\tau\right)^{-1}\left(\sqrt{\bar{v}}+\bar{w}\coth\sqrt{\bar{v}}\tau\right)\sinh\sqrt{\bar{v}}\tau.
\end{align}
From Lemma \ref{lemmaTanh} we know that both $\tanh$ and $\coth$ tend to $I_d$ as $\tau\to\infty$, hence we conclude that
\begin{align}
\lim_{\tau\to\infty}k(\tau)=-I_d
\end{align}
and so we obtain the final claim:
\begin{align}
\lim_{\tau\to\infty}{\psi}(\tau)=\frac{Q^{-1}\sqrt{\bar{v}}Q^{\top^{-1}}}{2}+\frac{\left(Q^\top Q\right)^{-1}M}{2}.
\end{align}
Finally, from Corollary \ref{cor_infty}, we know that $\lim_{\tau\to\infty}{\psi}(\tau)=\psi^\prime$, hence the claim.
\end{proof}

\section{Conclusions}

In this paper we derived a new explicit formula for the joint Laplace transform of the Wishart process and its time integral based on the original approach of \cite{article_Bru}. Our methodology leads to a truly explicit formula that does not involve any additional integration (like the highly time consuming variation of constants method) or blocks of matrix exponentials (like the linearization method) at the price of a simple condition on the parameters. We showed some examples of applications in the context of multifactor and multivariate stochastic volatility. Moreover, we provided an explicit solution to the algebraic Riccati ODE that appears in linear-quadratic control theory and for which only numerical schemes are available. We also recall a recent application of our result by \cite{BauerleLi2013} in the portfolio optimization setting of \cite{article_DaFonseca3}. \\


\appendix

\bibliography{biblio}
\bibliographystyle{abbrv}
\addcontentsline{toc}{section}{Bibliography}



\begin{table}[H]
	\centering
		\begin{tabular}{ccccc}
		\hline
Time Horizon		&Lin.&C.-M.&Var. Const.&R.-K.\\
		\hline
		\hline
     0&   0.998291461216988&   0.998291461216988&   0.998291461216988&   0.998291461216988\\
   0.1&   0.997303305375919&   0.997303305375919&   0.997306285702955&   0.997271605593416\\
   0.5&   0.992740622447456&   0.992740622447456&   0.992703104707601&   0.992583959952442\\
   1.0&   0.985698139368470&   0.985698139368470&   0.985426551402640&   0.985389882322825\\
   1.5&   0.977224894409802&   0.977224894409802&   0.976522044659620&   0.976770850175581\\
   2.0&   0.967388334051965&   0.967388334051964&   0.966066486500228&   0.966794966212865\\
   2.5&   0.956261597343174&   0.956261597343174&   0.954144681472186&   0.955535938544691\\
   3.0&   0.943922618087738&   0.943922618087738&   0.940848141282233&   0.943072180564490\\	
   		\hline
		\end{tabular}
	\caption{This table visualizes the joint moment generating function of the Wishart process and its time integral for different time horizons $\tau$. All four methods are considered. It should be noted that the variation of constants method requires a very fine integration grid in order to produce precise values that can be compared with the results of the other methods.}
	\label{tab:NumericalExperiment}
\end{table}

	\begin{figure}[H]
		\centering
			\includegraphics[scale=0.40]{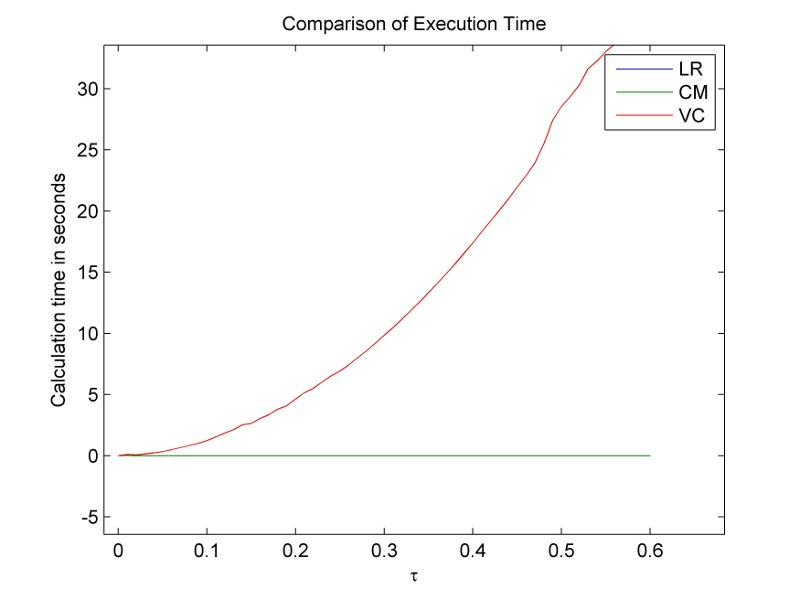}
		
		\caption{In this image we plot the time spent by the three analytical methods (linearization, Cameron-Martin and Variation of Constants) to compute the joint moment generating function of the Wishart process and its time integral for different time horizons. Notice that the Variation of Constants method is inefficient even for quite short maturities.}
	\label{fig:calculation_time}
	\end{figure}
%
		%

\begin{table}[H]
	\centering
		\begin{tabular}{cccc}
\hline
Time horizon&Lin.&C.M.&R.-K.\\
\hline
\hline
     0&   0.998291461216988&   0.998291461216988&   0.998291461216988\\
   0.1&   0.997303305375919&   0.997303305375919&   0.997271605593416\\
   0.5&   0.992740622447456&   0.992740622447456&   0.992583959952442\\
   1.0&   0.985698139368470&   0.985698139368470&   0.985389882322825\\
   2.0&   0.967388334051965&   0.967388334051964&   0.966794966212865\\
   3.0&   0.943922618087738&   0.943922618087738&   0.943072180564490\\
   4.0&   0.915938197508059&   0.915938197508059&   0.914862207389661\\
   5.0&   0.884120166104796&   0.884120166104796&   0.882852196560219\\
  10.0&   0.691634000576684&   0.691634000576684&   0.689897813632122\\
 100.0&   0.000001636282753&   0.000001636282753&   0.000001629036716\\
  \hline
  \end{tabular}
  	\caption{In this table we do not include the results for the variation of constants method. This allows us to look at a longer time horizon and appreciate the precision of the new methodology also in this case.}
	\label{tab:NumericalExperiment2}
\end{table}

\end{document}